\newtheorem{theorem}{Theorem}
\newtheorem{lemma}{Lemma}
\DeclareMathOperator*{\argmax}{argmax}
\begin{document}

\title{Refined Statistical Bounds for Classification Error Mismatches with Constrained Bayes Error
}

\author{
    \IEEEauthorblockN{Zijian Yang\IEEEauthorrefmark{1}\IEEEauthorrefmark{2}, Vahe Eminyan\IEEEauthorrefmark{1}, Ralf Schlüter\IEEEauthorrefmark{1}\IEEEauthorrefmark{2}, Hermann Ney\IEEEauthorrefmark{1}\IEEEauthorrefmark{2}}
    \IEEEauthorblockA{\IEEEauthorrefmark{1}Machine Learning and Human Language Technology Group, Lehrstuhl Informatik 6,\\
Computer Science Department, RWTH Aachen University, Germany}
    \IEEEauthorblockA{\IEEEauthorrefmark{2}AppTek GmbH, Germany}
}
\maketitle

\begin{abstract}
In statistical classification/multiple hypothesis testing and machine learning, a model distribution estimated from the training data is usually applied to replace the unknown true distribution in the Bayes decision rule, which introduces a mismatch between the Bayes error and the model-based classification error.
In this work, we derive the classification error bound to study the relationship between the Kullback-Leibler divergence and the classification error mismatch. We first reconsider the statistical bounds based on classification error mismatch derived in previous works, employing a different method of derivation. Then, motivated by the observation that the Bayes error is typically low in machine learning tasks like speech recognition and pattern recognition, we derive a refined Kullback-Leibler-divergence-based bound on the error mismatch with the constraint that the Bayes error is lower than a threshold.

\end{abstract}

\begin{IEEEkeywords}
machine learning, classification error bound, multiple hypothesis testing, mismatch condition
\end{IEEEkeywords}

\section{Introduction \& Related Work}

Statistical classification, also known as multiple hypothesis testing, is widely applied in machine learning areas, e.g. neural machine translation \cite{stahlberg2020neural}, automatic speech recognition \cite{prabhavalkar2023end}, and pattern recognition \cite{jain2000statistical}. In these tasks, the recognition/decoding result is generated by using
a decision rule. In statistical classification, an important performance measure is the classification error, which is minimized by the Bayes decision rule that utilizes the underlying true distribution of the classification task. Information theory provides several bounds on the Bayes error, such as Chernoff bound \cite{chu1971error}, nearest neighbor bound \cite{devijver1974new}, and Lainiotis bound \cite{lainiotis1969class}. However, these general bounds do not cover more specific modeling issues. In practice, the true distribution is unknown, and a probabilistic model is trained to approximate the true distribution and used in the decision rule, which introduces a discrepancy between the true distribution of the data and the probabilistic model \cite{ney2003relationship,schluter2013novel}. This discrepancy is not addressed in the works for error bounds on the Bayes error. In this work, we will make a mathematically strict distinction between true and model distributions used in decision rules. We refer to the difference between the Bayes error and the model-based classification error as the classification error mismatch.

In information theory and machine learning, many statistical measures are introduced w.r.t two mismatched distributions, e.g., Kullback–Leibler (KL) divergence and total variation distance. The relationship between total variation distance and KL divergence has been investigated in the past years. In machine learning, \cite[p.10]{vapnik1998statistical} introduced the \textit{Bretagnolle-Huber} bound for density estimation. Vajda et al. introduced a refinement of \textit{Pinsker's} inequality in \cite{vajda1970note}. Based on \cite{vajda1970note}, Fedotov et al. derived the parametrization of the tight bound between KL-divergence and total variation distance in \cite{fedotov2003refinements}. 

While considerable attention has been devoted to investigating the total variation distance in existing literature, relatively limited studies have been placed on the classification error mismatch. The relationship between total variation distance and the classification error mismatch was derived in \cite{ney2003relationship}. In that work, Ney derived several statistical bounds on the error mismatch from the bounds for total variation distance. Nussbaum et al. provided a tight bound between $f$-divergence and error mismatch in \cite{nussbaum2013relative}, and Schlüter et al. derived the complete proof of the bound in \cite{schluter2013novel}.

While the bound derived in \cite{nussbaum2013relative,schluter2013novel} is tight when the true distribution is arbitrary, it can be refined when more information about the true distribution is obtained.
In practice, many classification tasks typically have a low Bayes error. For instance, in speech recognition, the word error rate of human speech recognition is often below 1\% \cite{wesker2005oldenburg} for a wide range of conditions \cite{lippmann1997speech}, indicating the Bayes error to be even lower. Motivated by this, in this work, we derive that the KL-divergence-based classification error bound can be refined when the Bayes error is lower than a threshold.

This paper is organized as follows: initially, we provide an overview of the fundamental concepts related to the classification error problem. Subsequently, we reexamine the proof of the general $f$-divergence-based tight bound as proposed in \cite{schluter2013novel}, and derive the local and global bounds between KL-divergence and classification error mismatch with an alternative approach, without employing the permutation method used in the original proof. Based on the local bound, we then derive a refined KL-divergence-based bound under the condition that the Bayes error remains below a certain threshold.

\section{Classification Error Mismatch}
Consider a statistical classification problem, where $pr(c,x)$ is defined as the joint true distribution for a class $c\in \mathcal{C}$ and an observation $x \in \mathcal{X}$. To simplify the discussion, we assume that $x$ is a discrete variable, where $|\mathcal{X}| > 2$ and $|\mathcal{C}|>2$. The Bayse decision rule for the classification task is defined as:
\begin{equation}
    c_*^x: =  \argmax_c pr(c,x) = \argmax_c pr(c|x).
\end{equation}
where $pr$ is the true probability. In practical applications, the true distribution is unknown. Therefore, a model distribution $q(c,x)$ is employed to estimate the true distribution. The model-based decision rule is defined as:
\begin{equation}
    c_q^x := \argmax_c q(c,x) = \argmax_c q(c|x).
\end{equation}


For a joint event $(x,c)$, given the Bayes decision rule $x \rightarrow c_*^x$ and model-based decision rule $x \rightarrow c_q^x $, the classification error counts for Bayes and model-based decision rules are defined as:
\begin{align}
    e_*(x,c) := 1 - \delta(c_*^x,c ),\quad e_q(x,c) := 1 - \delta(c_q^x,c ),
\end{align}
where $\delta(\cdot, \cdot)$ is the Kronecker delta. The local Bayes classification error $E_*\{e|x\}$ is defined as the expectation of $e(x,c)$ under the true distribution $pr(c|x)$:
\begin{align}
    \mathbb{E}\{e_*|x\} &:= \sum_c pr(c|x) e_*(x,c)= 1- pr(c_*^x|x)
\end{align}
The local model-based classification error, i.e. the expectation of error using the model-based decision rule, is defined similarly by replacing the Bayes decision rule with the model-based decision rule:
\begin{align}
    \mathbb{E}\{e_q|x\} &:= \sum_c pr(c|x)e_q(x,c) = 1- pr(c_q^x|x)
\end{align}
Note that the model distribution is only used to estimate the true distribution in the decision rule. Therefore, the expectation is still computed under the true distribution.
The effect of local classification mismatch can be represented by the local error mismatch:
\begin{align}
    \Delta_q(x) &:= \mathbb{E}\{e_q|x\} - \mathbb{E}\{e_*|x\} = pr(c_*^x|x) - pr(c_q^x|x)
\end{align}
with  $\Delta_q(x) \in [0,1]$ according to the definition. The global Bayes and model-based classification errors, $E_*$ and $E_q$, are defined as the expectation of the local errors:
\begin{align}
    E_* = \sum_x pr(x) \mathbb{E}\{e_*|x\}, \quad E_q = \sum_x pr(x) \mathbb{E}\{e_q|x\}.
\end{align}

The global error mismatch $\Delta_q$ is defined as the difference between $E_q$ and $E_*$:
\begin{equation}
    \Delta_q := E_q - E_* = \sum_x pr(x) \Delta_q(x).
\end{equation}

It is shown in \cite{ney2003relationship} that the total variation distance $V$, defined as:
\begin{equation}
    V := \frac{1}{2}\sum_{x,c} |pr(c,x) - q(c,x)|,
\end{equation}
is an upper bound of the global error mismatch, namely:
\begin{equation}
    \Delta_q \leq \sum_{x,c} |pr(c,x) - q(c,x)| = 2V
\end{equation}
This indicates that all the upper bounds for total variation distance can also be applied to $\Delta_q$, though not tight anymore. In \cite{schluter2013novel}, the bounds derived from $V$ were compared with the tight bound derived for $\Delta_q$.


\section{Relationship between Local Error Mismatch and KL-Divergence }
\subsection{Local Bound for $f$-Divergence}
For discrete variables, the \textit{$f$-divergence} from $q(c|x)$ to $pr(c|x)$ is defined as:
\begin{align}
    D_f\big (pr(c|x) \parallel q(c|x)\big) := \sum_c q(c|x) f\big (\frac{pr(c|x)}{q(c|x)}\big )
    \label{eq:localfdef}
\end{align}
where $f$ is a convex function $f: R^+ \rightarrow R$ with the specific property $f(1)=0$. The understanding of the edge cases is that:
\begin{equation}
\begin{aligned}
    f(0) = f(0^+), \quad 0f(\frac{0}{0}) = 0.
\end{aligned}
\end{equation}
Equation \eqref{eq:localfdef} is referred to as a local measure because the measure is computed for a single observation.
In \cite{schluter2013novel}, Schlüter et al. derived the global bound of $\Delta_q$ based on $f$-divergence between joint distributions $pr(c,x)$ and $q(c,x)$, which can also be applied to the proof of the local $f$-divergence. In this work, we revisit the result from \cite{schluter2013novel} and reformulate it to a local bound, with a different proof.
\begin{theorem}
The local \textit{$f$-divergence} between $pr(c|x)$ and $q(c|x)$ is tightly lower-bounded by a function of the local error mismatch $\Delta_q(x)$ in the following way:
\begin{align}
    D_f\big(pr(c|x) \parallel q(c|x) \big) \geq \frac{1}{2} \big (f(1+\Delta_q(x)) + f(1-\Delta_q(x)) \big )
    \label{eq:localfbound}
\end{align}
\end{theorem}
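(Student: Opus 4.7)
The plan is to invoke the data processing inequality (DPI) for $f$-divergence and reduce to a two-class $f$-divergence that matches the right-hand side exactly. The trivial case $c_*^x = c_q^x$ gives $\Delta_q(x) = 0$ and the bound $D_f \geq 0$ is automatic, so assume $c_*^x \neq c_q^x$. The target $\tfrac{1}{2}[f(1+\Delta_q(x)) + f(1-\Delta_q(x))]$ is exactly $D_f(P^* \| Q^*)$ for the binary distributions $Q^* = (\tfrac{1}{2}, \tfrac{1}{2})$ and $P^* = (\tfrac{1+\Delta_q(x)}{2}, \tfrac{1-\Delta_q(x)}{2})$ on $\{0, 1\}$. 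It therefore suffices to exhibit a stochastic kernel $T : \mathcal{C} \to \{0, 1\}$ whose pushforward of $q(\cdot|x)$ equals $Q^*$ and whose pushforward of $pr(\cdot|x)$ places mass at least $\tfrac{1+\Delta_q(x)}{2}$ on $\{0\}$.

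The plan is to build $T$ explicitly by setting $t_c := T(0|c)$ with $t_{c_*^x} = 1$, $t_{c_q^x} = \tfrac{1}{2} - \tfrac{q(c_*^x|x)}{2\, q(c_q^x|x)}$, and $t_c = \tfrac{1}{2}$ for every $c \notin \{c_*^x, c_q^x\}$. Since $c_q^x = \argmax_c q(c|x)$, one has $q(c_q^x|x) \geq q(c_*^x|x)$, which yields $t_{c_q^x} \in [0, \tfrac{1}{2}]$, so $T$ is a valid kernel. A direct computation shows $\sum_c q(c|x) t_c = \tfrac{1}{2}$, and an analogous one gives $\sum_c pr(c|x) t_c - \tfrac{1+\Delta_q(x)}{2} = \tfrac{pr(c_q^x|x)}{2\, q(c_q^x|x)} \bigl(q(c_q^x|x) - q(c_*^x|x)\bigr) \geq 0$. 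The key trick is assigning $t_c = \tfrac{1}{2}$ uniformly on the remaining classes: this neutralizes their contribution to the balance equation, collapsing an otherwise overdetermined linear system to a single scalar adjustment at $c_q^x$, which the $\argmax$ inequality makes feasible.

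Writing $b := \sum_c pr(c|x) t_c \in [\tfrac{1+\Delta_q(x)}{2}, 1]$ and $h(b) := \tfrac{1}{2}[f(2b) + f(2-2b)]$, DPI gives $D_f\bigl(pr(c|x) \| q(c|x)\bigr) \geq h(b)$. The function $h$ is convex (by convexity of $f$) and symmetric about $b = \tfrac{1}{2}$, hence non-decreasing on $[\tfrac{1}{2}, 1]$. Combined with $b \geq \tfrac{1+\Delta_q(x)}{2} \geq \tfrac{1}{2}$, this yields $h(b) \geq h\bigl(\tfrac{1+\Delta_q(x)}{2}\bigr) = \tfrac{1}{2}[f(1+\Delta_q(x)) + f(1-\Delta_q(x))]$, closing the proof. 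Tightness is witnessed by $\mathcal{C} = \{1, 2\}$ with $pr(\cdot|x) = P^*$ and $q(\cdot|x) = Q^*$ (under a tie-breaking convention placing $c_q^x$ on the smaller-$pr$ class). The main obstacle is discovering the symmetric half-assignment that underlies $T$; once that is guessed, the rest is mechanical verification together with the convexity estimate for $h$.
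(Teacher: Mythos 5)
Your argument is correct, and it is a genuinely different route from the paper's. The paper first merges all classes outside $\{c_*^x,c_q^x\}$ via the two-point aggregation inequality (its Lemma~1), then aggregates again into two groups ($c_*^x$ plus half the residual mass, $c_q^x$ plus half the residual mass), which leaves a residual asymmetry $\epsilon=q(c_q^x|x)-q(c_*^x|x)$ in the $q$-weights; this is then eliminated by the monotonicity of the difference quotient $\frac{f(u)-f(1)}{u-1}$ and a limit $\epsilon\to 0^+$. You instead apply the data processing inequality with a binary channel chosen so that the $q$-pushforward is exactly uniform: $t_{c_*^x}=1$, $t_{c_q^x}=\tfrac12-\tfrac{q(c_*^x|x)}{2q(c_q^x|x)}$ (feasible precisely because $c_q^x$ maximizes $q(\cdot|x)$), and $t_c=\tfrac12$ elsewhere. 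I checked the computations: $\sum_c q(c|x)t_c=\tfrac12$, and $b:=\sum_c pr(c|x)t_c=\tfrac{1+\Delta_q(x)}{2}+\tfrac{pr(c_q^x|x)}{2q(c_q^x|x)}\big(q(c_q^x|x)-q(c_*^x|x)\big)\ge\tfrac{1+\Delta_q(x)}{2}$, and the finishing step via the convex, symmetric $h(b)=\tfrac12\big(f(2b)+f(2-2b)\big)$ being non-decreasing on $[\tfrac12,1]$ is sound. What each approach buys: yours dispenses with the auxiliary parameter $\epsilon$ and the limiting argument, and makes transparent that the bound is exactly the binary $f$-divergence against the uniform reference; the paper's stays self-contained (its only tool is the stated aggregation lemma, which is also the standard building block of the DPI you invoke as a black box) and its two-group aggregation directly suggests the parametrized family used for tightness. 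Two minor points: your tightness witness with $q$ exactly uniform needs a tie-breaking convention for the $\argmax$, which the paper avoids by taking $q=(0.5-\epsilon,0.5+\epsilon)$ and $\epsilon\to 0^+$, so you should either adopt that perturbation or state tightness as attainment in the limit; and note that even when $c_*^x\neq c_q^x$ but $\Delta_q(x)=0$ your argument still goes through since $b\ge\tfrac12$ and $h(\tfrac12)=f(1)=0$, so the case split you make is harmless.
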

\begin{proof}
When $\Delta_q(x) = 0$, the right-hand side of \eqref{eq:localfbound} equals $0$, the inequality holds because of the non-negativity of $f$-divergence. In the following case, we consider the non-trivial case $\Delta_q(x) \in (0,1]$.
\begin{lemma}
    Aggregation of two summands of an f-Divergence: with $p_1, p_2, q_1, q_2 \in \mathbf{R}^+$, the following inequality holds:
    \begin{equation}
        q_1 f(\frac{p_1}{q_1}) + q_2 f(\frac{p_2}{q_2}) \geq (q_1+q_2) f(\frac{p_1+p_2}{q_1+q_2})
        \label{eq:lumping}
    \end{equation}
\end{lemma}
With this lemma, we can derive that:
\vspace{-2mm}
\allowdisplaybreaks
    \begin{align}
        &D_f\big (pr(c|x) \parallel q(c|x)\big) = \sum_c q(c|x) f\big (\frac{pr(c|x)}{q(c|x)}\big) \notag \\
                & = \sum_{c \in \{c_*^x, c_q^x\} } q(c|x) f\big (\frac{pr(c|x)}{q(c|x)}\big) + \sum_{c \in \mathcal{C}\backslash \{c_*^x, c_q^x\}}q(c|x) f\big (\frac{pr(c|x)}{q(c|x)}\big) \notag \\
                &\geq \sum_{c \in \{c_*^x, c_q^x\} } q(c|x) f\big (\frac{pr(c|x)}{q(c|x)}\big) \notag \\ 
                & \quad + \sum_{c \in \mathcal{C}\backslash \{c_*^x, c_q^x\}}q(c|x)f \big(\frac{\sum_{c \in \mathcal{C}\backslash \{c_*^x, c_q^x\}}pr(c|x)}{\sum_{c \in \mathcal{C}\backslash \{c_*^x, c_q^x\}}q(c|x)}\big) \notag \\
                & = \sum_{c \in \{c_*^x, c_q^x\} } q(c|x) f\big (\frac{pr(c|x)}{q(c|x)}\big) \notag \\
                &\quad + \big(1-q(c_*^x|x) -q(c_q^x|x)\big) f\big(\frac{1-pr(c_*^x|x) -pr(c_q^x|x)}{1-q(c_*^x|x) -q(c_q^x|x)}\big) \notag \\ 
    \end{align}
    \vspace{2mm}
    \begin{align}
    &\geq \frac{1-q(c_*^x|x)+q(c_q^x|x)}{2}f\Big(\frac{\frac{1-pr(c_*^x|x) +pr(c_q^x|x)}{2}}{\frac{1-q(c_*^x|x) +q(c_q^x|x)}{2}}\Big) \notag \\
         &\quad + \frac{1+q(c_*^x|x)-q(c_q^x|x)}{2}f\Big(\frac{\frac{1+pr(c_*^x|x) -pr(c_q^x|x)}{2}}{\frac{1+q(c_*^x|x)- q(c_q^x|x)}{2}}\Big) \notag \\
                & = \frac{1+\epsilon}{2}f\big(\frac{1-\Delta_q(x)}{1+\epsilon}\big) +  \frac{1-\epsilon}{2}f\big(\frac{1+\Delta_q(x)}{1-\epsilon}\big) \label{eq:localproofstep1} 
    \end{align}
    where $ \epsilon := q(c_q^x|x) - q(c_*^x|x) \in (0,1)$.
    Note that $f$ is convex and $f(1)=0$, based on the property of a convex function that $\frac{f(u) - f(1)}{u-1}$ is monotonically non-decreasing in $u$, let $u=\frac{1-\Delta_q(x)}{1+\epsilon}$, and $u_0 = 1-\Delta_q(x)$, we have $1>u_0 \geq u$, and therefore the following inequality holds:\\
    \scalebox{0.95}{\parbox{1.05\linewidth}{
    \begin{align}
        &\frac{f(u) - f(1)}{u-1} \leq \frac{f(u_0) - f(1)}{u_0-1} \notag \\
        &\Rightarrow (1+\epsilon)\frac{f\big(\frac{1-\Delta_q(x)}{1+\epsilon}\big)}{-\Delta_q(x) - \epsilon} \leq \frac{f\big(1-\Delta_q(x)\big)}{-\Delta_q(x)} \notag \\
        &\Rightarrow (1+\epsilon) \frac{f\big(\frac{1-\Delta_q(x)}{1+\epsilon}\big)}{\Delta_q(x) + \epsilon} \geq \frac{f\big(1-\Delta_q(x)\big)}{\Delta_q(x)} \notag \\
        &\Rightarrow (1+\epsilon) f\big(\frac{1-\Delta_q(x)}{1+\epsilon}\big) \geq f\big(1-\Delta_q(x)\big) \big(1 + \frac{\epsilon}{\Delta_q(x)}\big) \label{eq:local1term}
    \end{align}
    }}
    Similarly, let $u=\frac{1+\Delta_q(x)}{1-\epsilon}$ and $u_1 = 1+\Delta_q(x)$, we have $u\geq u_1>1$, and therefore we obtain the following inequality:\\
    \scalebox{0.95}{\parbox{1.05\linewidth}{
        \begin{align}
        &\frac{f(u) - f(1)}{u-1} \geq \frac{f(u_1) - f(1)}{u_1-1} \notag \\
        &\Rightarrow (1-\epsilon) f\big(\frac{1+\Delta_q(x)}{1-\epsilon}\big) \geq f\big(1+\Delta_q(x)\big) \big(1 + \frac{\epsilon}{\Delta_q(x)}\big) \label{eq:local2term}
    \end{align}
    }}
    Then, by substituting \eqref{eq:local1term} and \eqref{eq:local2term} into \eqref{eq:localproofstep1}, we have:\\
    \scalebox{0.95}{\parbox{1.05\linewidth}{
    \allowdisplaybreaks
    \begin{align}{2}
        &D_f\big (pr(c|x) \parallel q(c|x)\big) \notag \\
        &\geq \frac{1}{2} \Big( \underbrace{f\big(1+\Delta_q(x)\big) + f\big(1-\Delta_q(x)\big)}_{\geq 2f\big(\frac{1+\Delta_q(x)+ 1-\Delta_q(x)}{2}\big) =2f(1) =0}\Big) \big(1 + \frac{\epsilon}{\Delta_q(x)}\big) \notag \\
       & \geq \lim_{\epsilon \rightarrow 0^+} \frac{1}{2} \Big(f\big(1+\Delta_q(x)\big) + f\big(1-\Delta_q(x)\big)\Big) \big(1 + \frac{\epsilon}{\Delta_q(x)}\big) \notag \\
       & = \frac{1}{2} \Big(f\big(1+\Delta_q(x)\big) + f\big(1-\Delta_q(x)\big)\Big).
        \label{eq:step2}
    \end{align}
    }}
    Therefore, \eqref{eq:localfbound} is proved.
    When $\Delta_q(x)=0$, the equality can be obtained by $pr(c|x) = q(c|x), \forall c$. When $\Delta_q \in (0,1]$, the equality of the bound can be obtained by the following parametrized distribution with $\lambda \in (0.5,1]$:\\
    \allowdisplaybreaks
    \scalebox{0.95}{\parbox{1.05\linewidth}{
\begin{align}
\begin{split}
    pr(c|x) &= \left \{ \begin{array}{ll}
      \lambda,   & c=c_1 \\
      1-\lambda,  & c=c_2\\
      0, & \text{otherwise}
          \end{array} \right . ,\\
        q(c|x) &= \lim_{\epsilon \rightarrow 0^+}\left \{\begin{array}{ll}
         0.5 - \epsilon, & c= c_1\\
         0.5 + \epsilon, & c=c_2\\
         0, &\text{otherwise} 
         \end{array}     \right . ,
\label{eq:tightdist}
\end{split}
\end{align}
}}
where $c_1$ and $c_2$ are two different classes, and $\epsilon$ ensures that $c_q^x=c_2 \neq c_1 = c_*^x$. With the distributions discussed above, the tightness of the bound is verified. 
\end{proof}

\subsection{Local Bound for KL-Divergence}
The KL-divergence is obtained by setting $f(u) = u\log u$. The associated lower bound becomes:
\begin{align}
    &D_\text{KL}(pr(c|x) \parallel q(c|x)) \geq B\big(\Delta_q(x)\big)
    \label{eq:localkl}
\end{align}
where $B$ is defined as:
\begin{equation}
    B(u) = \frac{1}{2}\big((1+u)\log(1+u) + (1-u)\log (1-u) \big)
\end{equation}
Note that $B$ is also a convex function and $B(0) = 0$.

\section{Relationship between Global Error Mismatch and KL-Divergence}
In machine learning tasks, the performance of the model is usually not measured on one single observation or data point, but on the whole dataset. Therefore, in this section, we investigate the bounds for the global error mismatch $\Delta_q$. 
\subsection{Conditional KL-Divergence}
Since $B\big(\Delta_q(x)\big)$ is the lower bound of the local KL-divergence and $B$ is convex, the expectation of local KL-divergence under $pr(x)$, namely, the conditional KL-divergence is lower-bounded by $B\big(\Delta_q\big)$:\\
\scalebox{0.95}{\parbox{1.05\linewidth}{
\begin{align}
    \sum_{x} &pr(x) D_\text{KL}\big(pr(c|x) \parallel q(c|x)\big) \geq \sum_x pr(x) B\big(\Delta_q(x)\big) \notag \\
    & \geq B\big(\sum_x pr(x)\Delta_q(x)\big) = B(\Delta_q) 
    \label{eq:expectbound}
\end{align}
}}
The equality for $\Delta_q \in (0,1]$ can be obtained with such a distribution that for each $x$, the conditional probabilities are like in \eqref{eq:tightdist} with $\lambda \in (0.5,1]$. When $\Delta_q=0$, the equality can be obtained by $pr(c|x)=q(c|x), \forall x,c$. 


\subsection{Joint KL-Divergence}
Now we consider another global measure, KL-divergence $D_\text{KL}(pr\parallel q)$ between joint distributions $pr(c,x)$ and $q(c,x)$:
\begin{equation}
    D_\text{KL}(pr\parallel q) := \sum_{x,c} pr(c,x) \log \frac{pr(c,x)}{q(c,x)}.
\end{equation}
In \cite{schluter2013novel}, the bound between $D_\text{KL}(pr\parallel q)$ and $\Delta_q$ was proved by utilizing the permutation operation. Here, we provide an alternative proof. To this end, we show that $D_\text{KL}(pr \parallel q)$ is lower-bounded by the conditional KL-divergence:
\allowdisplaybreaks
\begin{align}
    &D_\text{KL}(pr \parallel q) = \sum_{x,c} pr(c,x) \big (\log \frac{pr(c|x)}{q(c|x)} + \log \frac{pr(x)}{q(x)}) \notag \\
    & = \sum_x pr(x) \sum_c pr(c|x) \log \frac{pr(c|x)}{q(c|x)} + \sum_x pr(x) \log \frac{pr(x)}{q(x)}\notag \\
    &\geq \sum_x pr(x) \sum_c pr(c|x) \log \frac{pr(c|x)}{q(c|x)}     \label{eq:condtojoint} 
\end{align}

Combining \eqref{eq:expectbound} and \eqref{eq:condtojoint}, it is observed that $D_\text{KL}(pr\parallel q)$ is also lower-bounded by $B(\Delta_q)$:
\begin{equation}
     D_\text{KL}(pr \parallel q) \geq  B(\Delta_q). \label{eq:globalbound}
\end{equation}
The equality is obtained when having the same conditional distributions as in the case of conditional KL-divergence, with additional condition $pr(x)=q(x)$.

\section{Refined Bounds with Constraints on $E_*$}
The bound \eqref{eq:globalbound} is tight when the true distribution $pr(c,x)$ is unconstrained. However, the bound can be refined if the true distribution is subject to some constraints. One typical condition in machine learning tasks is that the Bayes error is low. For instance, the word-level classification error of human speech recognition can be below 1\% \cite{lippmann1997speech}, indicating the Bayes error to be even lower. Inspired by this, we consider a system with low Bayes error $E_*\leq t < 0.5$, where $t$ is an estimated threshold. Under this constraint, a refined bound is given as:
\begin{theorem}
\label{theorem:refinedbound}
    When $E_*\leq t< 0.5$, $D_\text{KL}(pr||q)$ is lower-bounded by the following function of $\Delta_q$,\\
\begin{equation}
    \begin{aligned}
        D_\text{KL}(pr||q) \geq 
        \left \{ \begin{array}{lr}
     (\Delta_q+ 2t) B\big (\frac{\Delta_q}{\Delta_q+2t}\big )&, \text{for }\Delta_q \in [0, 1-2t)\\
        B(\Delta_q) &,\text{for }\Delta_q \in [1-2t, 1]
    \end{array}
\right .
\label{eq:constrainedbound}
\end{aligned}
\end{equation}
\end{theorem}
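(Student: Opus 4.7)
The plan is to chain the local bound \eqref{eq:localkl} with two applications of Jensen's inequality for the convex function $B$, while exploiting a feasibility constraint linking $\Delta_q(x)$ and $\mathbb{E}\{e_*|x\}$ that is forced by the definitions of $c_*^x$ and $c_q^x$.

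First, combining \eqref{eq:condtojoint} with \eqref{eq:localkl} gives $D_\text{KL}(pr\parallel q)\geq \sum_x pr(x) B(\Delta_q(x))$, so it suffices to lower bound the right-hand side by the claimed expression under the side conditions $\sum_x pr(x) \Delta_q(x) = \Delta_q$ and $E_* \leq t$.

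Second, I would introduce $A = \{x:\Delta_q(x)>0\}$ with $P_A = \sum_{x\in A} pr(x)$. For any $x\in A$, the condition $\Delta_q(x)>0$ forces $c_q^x \neq c_*^x$, hence $pr(c_*^x|x)+pr(c_q^x|x)\leq 1$; substituting $pr(c_q^x|x)=pr(c_*^x|x)-\Delta_q(x)$ yields $pr(c_*^x|x)\leq(1+\Delta_q(x))/2$, equivalently $\mathbb{E}\{e_*|x\}\geq (1-\Delta_q(x))/2$. Averaging against $pr(x)$, and using $\mathbb{E}\{e_*|x\}\geq 0$ outside $A$, gives $E_* \geq (P_A-\Delta_q)/2$, so the hypothesis $E_*\leq t$ forces $P_A \leq P_{\max}:=\min(1,\Delta_q+2t)$.

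Third, since $B(0)=0$ and $B$ is convex, Jensen's inequality applied to the conditional distribution $pr(x)/P_A$ on $A$ yields $\sum_x pr(x) B(\Delta_q(x)) \geq P_A\, B(\Delta_q/P_A)$. I would then define $h(P):=P\cdot B(\Delta_q/P)$ and use $B'(u)=\tfrac{1}{2}\log\frac{1+u}{1-u}$ to compute $h'(P)=\tfrac{1}{2}\log\bigl(1-(\Delta_q/P)^2\bigr)\leq 0$, showing that $h$ is non-increasing on $[\Delta_q,\infty)$. Therefore $h(P_A)\geq h(P_{\max})$, and splitting on whether $P_{\max}=1$ (i.e., $\Delta_q \geq 1-2t$) or $P_{\max}=\Delta_q+2t$ (i.e., $\Delta_q < 1-2t$) recovers the two branches of \eqref{eq:constrainedbound}.

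The main non-obvious step is the lower bound $\mathbb{E}\{e_*|x\}\geq(1-\Delta_q(x))/2$ on $A$: this is what converts the global budget $E_*\leq t$ into the cap $P_A\leq \Delta_q+2t$ on the mass where the model decision differs from the Bayes decision. Everything else is a two-step Jensen reduction of $\sum_x pr(x) B(\Delta_q(x))$ together with the short monotonicity calculation for $P\mapsto P\cdot B(\Delta_q/P)$, after which the two-case formula drops out automatically.
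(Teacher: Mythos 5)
Your proposal is correct, and it takes a genuinely different route from the paper's. The paper first shows (Lemma~\ref{lemma:x0}) that for $\Delta_q<1-2t$ there must exist an observation $x_0$ on which the Bayes and model decisions agree, assumes ``without loss of generality'' that they disagree on every other observation, and then proves the key estimate $\Tilde{\Delta}_q(x_0)\geq \Delta_q/(2t+\Delta_q)$ (Lemma~\ref{lemma:mindelta}) before applying Jensen and the monotonicity of $B(u)/u$; the branch $\Delta_q\in[1-2t,1]$ is obtained separately from the unconstrained bound \eqref{eq:globalbound}, with an explicit construction showing the constraint permits no improvement there. You instead partition on the set $A$ where the two decisions differ, use the same pointwise inequality $\mathbb{E}\{e_*|x\}\geq(1-\Delta_q(x))/2$ (the paper's \eqref{eq:delta_and_accu}) to convert the budget $E_*\leq t$ into the mass cap $P_A\leq\min(1,\Delta_q+2t)$, and then one Jensen step plus the monotonicity of $h(P)=P\,B(\Delta_q/P)$ delivers both branches of \eqref{eq:constrainedbound} in a single argument. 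Your computation $h'(P)=\tfrac12\log\bigl(1-(\Delta_q/P)^2\bigr)\leq 0$ is correct and is the KL-specific counterpart of the paper's general convexity argument, since $h(P)=\Delta_q\,B(u)/u$ with $u=\Delta_q/P$ and $B(u)/u$ non-decreasing. What your version buys: no existence lemma, no case split inside the argument, and it makes rigorous the aggregation of all ``agreeing'' observations that the paper handles only by a WLOG statement. What the paper's version buys: keeping $pr(x_0)$ and $\mathbb{E}\{e_*|x_0\}$ explicit directly suggests the two-observation distributions achieving equality, i.e.\ tightness, which your argument does not address (though the theorem statement does not require it). Two small points you should make explicit when writing this up: $P_A\geq\Delta_q$ (because $\Delta_q(x)\leq 1$), so $\Delta_q/P_A\leq 1$ and $h$ is evaluated inside its domain of monotonicity; and the trivial case $\Delta_q=0$, where $P_A$ may vanish, should be dispatched separately since both sides of \eqref{eq:constrainedbound} are then $0$.
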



To prove Theorem \ref{theorem:refinedbound}, we first introduce two lemmas, followed by their proofs.
\begin{lemma}
\label{lemma:x0}
    When $\Delta_q< 1-2t$, there is at least one observation $x_0$ with $pr(x_0) > 0$ such that $c_*^{x_0} = c_q^{x_0}$ holds.
\end{lemma}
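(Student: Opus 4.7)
The natural approach is proof by contradiction. Suppose, toward a contradiction, that for every $x$ with $pr(x) > 0$ we have $c_*^x \neq c_q^x$. I want to show this forces $\Delta_q \geq 1 - 2t$, contradicting the hypothesis $\Delta_q < 1 - 2t$.

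The key local inequality relies on the fact that $c_*^x$ and $c_q^x$ are distinct classes, so their true conditional probabilities are disjoint events and satisfy $pr(c_*^x|x) + pr(c_q^x|x) \leq 1$. This gives $pr(c_q^x|x) \leq 1 - pr(c_*^x|x) = \mathbb{E}\{e_*|x\}$, hence
\begin{equation}
    \mathbb{E}\{e_q|x\} = 1 - pr(c_q^x|x) \geq pr(c_*^x|x) = 1 - \mathbb{E}\{e_*|x\}, \notag
\end{equation}
i.e., $\mathbb{E}\{e_q|x\} + \mathbb{E}\{e_*|x\} \geq 1$ pointwise on the support of $pr(x)$.

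Taking expectation with respect to $pr(x)$ then yields $E_q + E_* \geq 1$, which is equivalent to $\Delta_q + 2E_* \geq 1$. Combined with the hypothesis $E_* \leq t$, this gives $\Delta_q \geq 1 - 2E_* \geq 1 - 2t$, contradicting $\Delta_q < 1 - 2t$. Hence the assumption fails and there must exist some $x_0$ with $pr(x_0) > 0$ and $c_*^{x_0} = c_q^{x_0}$.

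I do not anticipate a real obstacle here: the argument is entirely elementary, relying only on the defining property of the Bayes rule (that the two argmax classes must be distinct for the error counts to differ locally) and on averaging. The only subtlety worth flagging in the write-up is the strictness $\Delta_q < 1 - 2t$ (rather than $\leq$) together with $t < 0.5$, which ensures $1 - 2t > 0$ and lets the contradiction go through cleanly; equivalently, one can notice that if $c_*^x = c_q^x$ fails everywhere on the support, the achievable region of $(\Delta_q, E_*)$ lies entirely in $\{\Delta_q + 2E_* \geq 1\}$.
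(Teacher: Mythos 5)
Your proof is correct and is essentially the paper's own argument: both proceed by contradiction, use the disjointness of the distinct classes $c_*^x \neq c_q^x$ to get the pointwise bound $pr(c_q^x|x) \leq 1 - pr(c_*^x|x)$ (equivalently $\Delta_q(x) \geq 1 - 2\mathbb{E}\{e_*|x\}$), and then average over $pr(x)$ and invoke $E_* \leq t$ to obtain $\Delta_q \geq 1-2t$, contradicting the hypothesis. Your restriction of the contradiction hypothesis to the support of $pr(x)$ is a harmless (and slightly more careful) variant of the same argument.
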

\begin{proof}
    For the sake of contradiction, suppose for all $x$, $c_*^x \neq c_q^x$, then for each local error mismatch $\Delta_q(x)$, we have
\begin{equation}
\begin{aligned}
    \Delta_q(x) = pr(c_*^x|x) - pr(c_q^x|x)  &\geq pr(c_*^x|x) - \big(1- pr(c_*^x|x)\big) \\
    &= 1- 2\mathbb{E}\{e_*|x\}\label{eq:delta_and_accu}
    \end{aligned}
\end{equation}
Therefore, $\Delta_q$ is lower-bounded by: 
\begin{align}
    \Delta_q = \sum_x pr(x) \Delta_q(x)&\geq 2\sum_x pr(x)pr(c_*^x|x) -1 \notag \\
    &=1- 2E_* \geq 1-2t,
\end{align}
which contradicts the condition on $\Delta_q$.
\end{proof}
\vspace{-2mm}
Note that since $c_*^{x_0} = c_q^{x_0}$, $\Delta_q(x_0)=0$. Without loss of generality, we assume that for all the other observations $x \neq x_0$, $c_*^x \neq c_q^x$. When $pr(x_0) = 1$, $\Delta_q= pr(x_0) \Delta_q(x_0) =0$, and the right-hand side of \eqref{eq:constrainedbound} is $0$. Therefore \eqref{eq:constrainedbound} holds. In the following discussion, we assume $pr(x_0)<1$. $\Delta_q$ can be rewritten as:
\allowdisplaybreaks
\begin{align}
    \Delta_q &= pr(x_0) \Delta_q(x_0) + \sum_{x \neq x_0} pr(x) \Delta_q(x) \notag \\
    & = 0 + \big (1-pr(x_0)\big) \sum_{x \neq x_0} \frac{pr(x)}{1-pr(x_0)} \Delta_q(x) \notag \\
    & = \big (1-pr(x_0)\big) \Tilde{\Delta}_q(x_0)
    \label{eq:px0}
\end{align}
where $ \Tilde{\Delta}_q(x_0)$ is the expected classification error for the renormalized true distribution $\Tilde{pr}(x)$ without $x_0$.\\
\scalebox{0.95}{\parbox{1.05\linewidth}{
\begin{align}
    \Tilde{pr}(x) = \frac{pr(x)}{1-pr(x_0)}, \Tilde{\Delta}_q(x_0)= \sum_{x \neq x_0} \Tilde{pr} (x)\Delta_q(x)
    \label{eq:renormDelta}
\end{align}
}}

\begin{lemma}
    $\forall \Delta_q \in [0,1-2t)$, $\Tilde{\Delta}_q(x_0)$ is lower-bounded by:\\
   \label{lemma:mindelta}

\begin{align}
    \Tilde{\Delta}_q(x_0)\geq \Delta_q \big(1+ \frac{1-2t-\Delta_q}{2(t-\mathbb{E}\{e_*|x_0\})+\Delta_q}\big ) \geq \frac{\Delta_q}{2t+\Delta_q}.
\end{align}
\vspace{-2mm}
\end{lemma}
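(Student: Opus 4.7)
The plan is to convert the task into a sharp upper bound on $1 - pr(x_0)$: once this is achieved, the identity \eqref{eq:px0} immediately gives $\tilde{\Delta}_q(x_0) = \Delta_q/(1-pr(x_0))$ in the nondegenerate case $pr(x_0)<1$ (and $pr(x_0)=1$ forces $\Delta_q=0$, so the lemma is trivial). Intuitively, the constraint $E_* \leq t$ combined with the fact that every $x\neq x_0$ misclassifies will together force $x_0$ to carry a significant probability mass.

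First, I would apply inequality \eqref{eq:delta_and_accu} from the proof of Lemma \ref{lemma:x0} at each $x \neq x_0$, where by assumption $c_*^x \neq c_q^x$, obtaining $\Delta_q(x) \geq 1 - 2\mathbb{E}\{e_*|x\}$. Taking the $pr$-expectation over $x \neq x_0$, substituting $\sum_{x\neq x_0} pr(x)\mathbb{E}\{e_*|x\} = E_* - pr(x_0)\mathbb{E}\{e_*|x_0\}$, and then using $E_* \leq t$, yields
\begin{equation*}
pr(x_0)\bigl(1 - 2\mathbb{E}\{e_*|x_0\}\bigr) \;\geq\; 1 - 2t - \Delta_q \;>\; 0.
\end{equation*}
The strict positivity of the right-hand side, combined with $pr(x_0) \in (0,1)$, simultaneously certifies $\mathbb{E}\{e_*|x_0\} < 1/2$ (so dividing preserves inequality direction) and $\mathbb{E}\{e_*|x_0\} < t + \Delta_q/2$ (so the denominator $2(t-\mathbb{E}\{e_*|x_0\})+\Delta_q$ appearing in the target bound is strictly positive). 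Dividing then yields $1-pr(x_0) \leq \bigl(2(t-\mathbb{E}\{e_*|x_0\})+\Delta_q\bigr)/\bigl(1-2\mathbb{E}\{e_*|x_0\}\bigr)$, and the algebraic identity $\tfrac{1-2\mathbb{E}\{e_*|x_0\}}{2(t-\mathbb{E}\{e_*|x_0\})+\Delta_q} = 1 + \tfrac{1-2t-\Delta_q}{2(t-\mathbb{E}\{e_*|x_0\})+\Delta_q}$ turns $\tilde{\Delta}_q(x_0) = \Delta_q/(1-pr(x_0))$ into the first claimed inequality.

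For the second inequality I would simply cross-multiply, reducing the task to $\mathbb{E}\{e_*|x_0\}(1-2t-\Delta_q) \geq 0$, which holds because both factors are nonnegative (with equality precisely at $\mathbb{E}\{e_*|x_0\}=0$, which is exactly why $\Delta_q/(2t+\Delta_q)$ is the strongest uniform lower bound). The main obstacle in executing this plan is purely bookkeeping: each division step requires knowing the sign of its denominator, and I would sequence the write-up so that both $1-2\mathbb{E}\{e_*|x_0\}>0$ and $2(t-\mathbb{E}\{e_*|x_0\})+\Delta_q>0$ are deduced from the single displayed chain \emph{before} any reciprocals are formed.
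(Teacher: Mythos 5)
Your proposal is correct and follows essentially the same route as the paper's proof: both isolate $x_0$, apply the inequality $\Delta_q(x)\ge 1-2\mathbb{E}\{e_*|x\}$ to the misclassified observations $x\neq x_0$, combine it with the constraint $E_*\le t$, and finish with equivalent algebra (your cross-multiplication for the second inequality is the same as the paper's minimization of the bound over $\mathbb{E}\{e_*|x_0\}\ge 0$). The only difference is bookkeeping: you solve the averaged inequality for $pr(x_0)$ directly, so that $pr(x_0)\bigl(1-2\mathbb{E}\{e_*|x_0\}\bigr)\ge 1-2t-\Delta_q>0$ delivers both sign conditions ($\mathbb{E}\{e_*|x_0\}<\tfrac{1}{2}$ and $2(t-\mathbb{E}\{e_*|x_0\})+\Delta_q>0$) in one step, whereas the paper works with the renormalized error $\tilde{E}_*(x_0)$ and establishes the positivity of the denominator by a separate contradiction argument.
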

\begin{proof}
when $\Delta_q=0$, the Lemma obviously holds. Therefore, we consider the interval $ \Delta_q \in (0, 1-2t)$. According to the definition of $\Delta_q$, we have:\\
\scalebox{0.95}{\parbox{1.05\linewidth}{
\begin{align}
    \Delta_q 
      & = \left(1-pr(x_0)\right) \Tilde{\Delta}_q(x_0) \Rightarrow  \quad pr(x_0)= 1- \frac{\Delta_q}{\Tilde{\Delta}_q(x_0)}. \label{eq:prx0}
\end{align}
}}
Since $\Delta_q < 1-2t$, we also have:
\begin{align}
    \big(1-pr(x_0)\big) \Tilde{\Delta}_q(x_0) = \Delta_q   <1-2t.
    \label{eq:deltasmall}
\end{align}
Meanwhile, according to the constraint $E_* \leq t$, we have:
\begin{align}
    E_* = pr(x_0) \mathbb{E}\{e_*|x_0\} + \big(1-pr(x_0)\big)\Tilde{E}_*(x_0) \leq t \label{eq:exo}.
\end{align}
where $\Tilde{E}_*(x_0)$ is defined as:
\begin{equation}
    \Tilde{E}_*(x_0) = \sum_{x \neq x_0 } \Tilde{pr}(x) \mathbb{E}\{e_*|x\}.
\end{equation}
According to \eqref{eq:delta_and_accu} and the definition of $\Tilde{\Delta}_q(x_0)$ in \eqref{eq:renormDelta}, $ \Tilde{E}_*(x_0)$ and $\Tilde{\Delta}_q(x_0)$ have the following relationship:
\begin{align}
    \Tilde{\Delta}_q(x_0) \geq1- 2\Tilde{E}_*(x_0) \Leftrightarrow \Tilde{E}_*(x_0) \geq \frac{1-\Tilde{\Delta}_q(x_0)}{2} \label{eq:Tildee}
\end{align}
According to \eqref{eq:deltasmall} and \eqref{eq:Tildee}, we have:
\begin{align}
    (1-pr(x_0)) (1-2\Tilde{E}_*(x_0)) < 1-2t \notag \\
    \Rightarrow (1-pr(x_0)) \Tilde{E}_*(x_0) > t-\frac{pr(x_0)}{2} \label{eq:e0}
\end{align}
By substituting \eqref{eq:e0} into \eqref{eq:exo}, we obtain that $\mathbb{E}\{e_*|x_0\}< \frac{1}{2}$:
\begin{align}
    pr(x_0) \mathbb{E}\{e_*|x_0\} &\leq t - (1-pr(x_0))\Tilde{E}_*(x_0)< t - t+ \frac{pr(x_0)}{2} \notag\\
    \Rightarrow \mathbb{E}\{e_*|x_0\}& < \frac{1}{2}.
\end{align}
To obtain the bound for $\Delta_q$, by substituting \eqref{eq:prx0} into \eqref{eq:exo}, combined with \eqref{eq:Tildee}, the following inequality can be derived.
\begin{align}
    pr(x_0) \mathbb{E}\{e_*|x_0\} + \big(1-pr(x_0)\big) \frac{1-\Tilde{\Delta}_q(x_0)}{2} \leq t \notag \\
    \Rightarrow \big(1- \frac{\Delta_q}{\Tilde{\Delta}_q(x_0)}\big) \mathbb{E}\{e_*|x_0\} + \big(\frac{\Delta_q}{\Tilde{\Delta}_q(x_0)}\big) \frac{1-\Tilde{\Delta}_q(x_0)}{2} \leq t \notag \\
    \Rightarrow (2t-2\mathbb{E}\{e_*|x_0\}+\Delta_q) \Tilde{\Delta}_q(x_0)\geq \Delta_q (1 - 2\mathbb{E}\{e_*|x_0\}) \label{eq:deltaqfinal}
\end{align}
\vspace{1mm}
\begin{figure}[htb!]
  \centering
  \includegraphics[height=4cm]{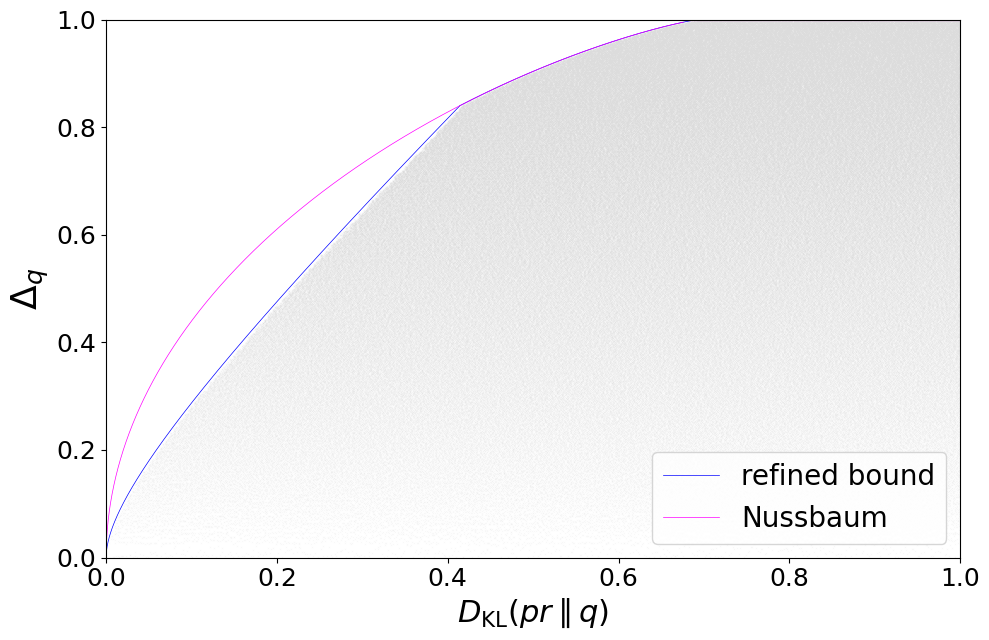}
   \caption{Comparison of the Nussbaum bound \cite{nussbaum2013relative} and the refined bound in this paper. The simulations in the upper figure are under the constraint $E_* \leq 0.08$. The grey dots refer to the simulation points.}
  \label{fig:kl_compare}
\end{figure}
Now we prove that
\begin{equation}
    2t-2\mathbb{E}\{e_*|x_0\}+\Delta_q >0.
    \label{eq:positivecond}
\end{equation}
For the sake of contradiction, if $2t-2\mathbb{E}\{e_*|x_0\}+\Delta_q \leq 0 $, the left-hand side of \eqref{eq:deltaqfinal} is less than or equal to $0$, while the right-hand side of the inequality, $\Delta_q (1 - 2\mathbb{E}\{e_*|x_0\}) > 0$ because $\mathbb{E}\{e_*|x_0\}< 0.5 $ and $\Delta_q >0$, which contradicts to the inequality. Therefore, \eqref{eq:positivecond} holds. By dividing both sides of \eqref{eq:deltaqfinal} by \eqref{eq:positivecond}, we have:
\begin{equation}
    \begin{aligned}
    \Tilde{\Delta}_q(x_0) &\geq \frac{\Delta_q (1 - 2\mathbb{E}\{e_*|x_0\})}{2t-2\mathbb{E}\{e_*|x_0\}+\Delta_q} \\
    &= \Delta_q \big(1+ \frac{1-2t-\Delta_q}{2(t-\mathbb{E}\{e_*|x_0\})+\Delta_q}\big )
    \end{aligned}
\end{equation}
Because $\Delta_q<1-2t$ and $\mathbb{E}\{e_*|x_0\} \geq 0$, the right-hand side of the inequality obtains minimum when $\mathbb{E}\{e_*|x_0\}=0$, i.e.
\begin{equation}
     \Tilde{\Delta}_q(x_0)\geq \Delta_q \big(1+ \frac{1-2t-\Delta_q}{2(t-\mathbb{E}\{e_*|x_0\})+\Delta_q}\big ) \geq \frac{\Delta_q}{2t+\Delta_q}
     \label{eq:deltalowerbound}
\end{equation}
\end{proof}
Based on Lemma \ref{lemma:x0} and Lemma \ref{lemma:mindelta}, we provide the proof of Theoream \ref{theorem:refinedbound} as follows:
\begin{proof}[Proof of Theorem \ref{theorem:refinedbound}]
\allowdisplaybreaks
When $\Delta_q=0$, the inequality holds because of the non-negativity. When $\Delta_q \in (0,1-2t)$, we have:
\begin{align*}
    &D_\text{KL}(pr \parallel q) \geq \sum_x pr(x) \sum_c pr(c|x) \log \frac{pr(c|x)}{q(c|x)} \Big(\text{c.f. } \eqref{eq:condtojoint}\Big )\\
     &= pr(x_0) D_\text{KL}\big( pr(c|x_0) \parallel q(c|x_0) \big)  \\
     & \qquad \quad + \sum_{x \neq x_0 } pr(x) D_\text{KL}\big( pr(c|x) \parallel q(c|x) \big) \\
    &= pr(x_0) D_\text{KL}\big( pr(c|x_0) \parallel q(c|x_0) \big)\\
    & \quad + \big (1-pr(x_0)\big ) \sum_{x \neq x_0}\Tilde{pr}(x) D_\text{KL}\big( pr(c|x) \parallel q(c|x) \big) \\
    & \geq \underbrace{pr(x_0) B\big(\Delta_q(x_0)\big)}_{= 0\text{, c.f. Lemma \ref{lemma:x0}}} + \big(1-pr(x_0)\big) \underbrace{\sum_{x \neq x_0}\Tilde{pr}(x) B\big(\Delta_q(x)\big)}_{\text{apply Jensen inequality}}\\
    &\geq \big (1-pr(x_0)\big ) B \big(\underbrace{\sum_{x \neq x_0} \Tilde{pr}(x) \Delta_q(x)}_{= \Tilde{\Delta}_q(x_0), \text{ c.f. \eqref{eq:renormDelta}}} \big ) \\
\end{align*}

\begin{align*}
\vspace{0.5mm}
 &= \big(1-pr(x_0)\big) B\big(\Tilde{\Delta}_q(x_0)\big) = \Delta_q \frac{B\big(\Tilde{\Delta}_q(x_0) \big)}{\Tilde{\Delta}_q(x_0)}  \Big (\text{c.f. \eqref{eq:px0}} \Big)\\
    & \geq \Delta_q \frac{B(\frac{\Delta_q}{2t+\Delta_q})}{\frac{\Delta_q}{2t+\Delta_q}} = (\Delta_q + 2t) g\big(\frac{\Delta_q}{\Delta_q+2t}\big)  \tag{\stepcounter{equation}\theequation}\\
    & \big(\text{c.f. \eqref{eq:deltalowerbound}, and } B\text{ is convex, } \frac{B(u)-g(0)}{u-0} \text{ is monotonically}\\
    &\text{non-decreasing; equality for } \Tilde{\Delta}_q(x_0)=\frac{\Delta_q}{\Delta_q+2t} \big ) 
\end{align*}
For the second segment $\Delta_q \in [1-2t,1]$, let $pr(c|x)$ and $q(c|x)$ be distributions given in \eqref{eq:tightdist} for each $x$. In this case, 
\begin{align}
    \Delta_q &= 2\lambda -1 \Rightarrow \lambda = \frac{1+\Delta_q}{2} \in [1-t,1],\\
    E_* &= 1- \lambda \leq t
\end{align}
which shows that these distributions are valid under the constraint. Therefore, \eqref{eq:globalbound} is still the tightest bound.
\end{proof}
\vspace{-1.5mm}
For the first segment where $\Delta_q \in [0,1-2t)$, equality is achieved through a particular selection of distributions. Effectively, there are two observations $x_1,x_2$, and $pr(x) = 0$ for $x \notin \{x_1,x_2\}$. Given a parameter $\lambda \in [0.5, 1-t)$, the true and model distributions for observation $x_1$ and $x_2$ are parametrized as follows:\\
\scalebox{0.95}{\parbox{1.05\linewidth}{
\begin{equation}
\begin{aligned}
    pr(x_1) &=1- \frac{t}{1-\lambda},  pr(c|x_1) = q(c|x_1) = \left \{\begin{array}{ll}
         1, &  c= c_1\\
         0, &\text{otherwise} 
    \end{array}
    \right .\\
   pr(x_2) &= \frac{t}{1-\lambda}, \quad pr(c|x_2) =  \left \{\begin{array}{ll}
         \lambda, &  c= c_1\\
         1- \lambda, &  c=c_2\\
         0, &\text{otherwise} 
         \end{array}     \right .\\
    q(c|x_2) &= \lim_{\epsilon \rightarrow 0^+}\left \{\begin{array}{ll}
         0.5 - \epsilon, & c= c_1\\
         0.5 + \epsilon, & c=c_2\\
         0, &\text{otherwise} 
         \end{array}     \right .
\end{aligned}
\end{equation}
}}
Figure \ref{fig:kl_compare} shows the comparison of the Nussbaum bound \eqref{eq:globalbound} and the derived bound \eqref{eq:constrainedbound} with simulation results, on the constraint $E_* \leq 0.08$. The simulation was conducted by generating various distribution pairs $(pr,q)$ until all the reachable areas were covered. Each grey dot represents the result of a single simulation.
The simulation results verify that \eqref{eq:constrainedbound} is tight under the constraint $E_*\leq t$ for $\Delta_q \in [0,1]$, providing an improved bound compared to Nussbaum bound.

\section{Conclusion}
In this work, we investigated the relationship between the Kullback-Leibler divergence and the classification error mismatch, which is introduced by replacing the unknown true distribution with the distribution from the model in Bayes decision rule. We started by revisiting the statistical bound with unconstrained true distributions from previous works and offered an alternative proof. Motivated by the assumption that the Bayes error is typically low in machine learning tasks, we further derived the refined bound on the error mismatch under the constraint that Bayes error is lower than a threshold. The analytical results were supported by simulations.

\bibliographystyle{IEEEtran}
\bibliography{mybib}

\end{document}